\newtheorem{theorem}{Theorem}[section]
\newtheorem{proposition}[theorem]{Proposition}
\theoremstyle{definition}
\newtheorem{definition}[theorem]{Definition}
\theoremstyle{remark}
\newtheorem{remark}[theorem]{Remark}
\numberwithin{equation}{section}
\begin{document}
\begin{sagesilent}
import numpy as np
def sum(n,r):
    sum=0
    for k in range(1,r+1):
        sum = sum + binomial(r-1,k-1)*(k**(n-1))*((-1)**(r-k))
    return sum   
    
def a(n,r):
    return factorial(r-1)*sum(n,r)  
\end{sagesilent}

\setcounter{page}{1}

\title[Study of Multiplication operator]{\textbf{Study of  Multiplication operator on $\mathcal{H}_{\frac{1}{2}}\oplus\mathcal{H}_{-\frac{1}{2}}$}}
\author[A.Noufal]{\tiny{A. Noufal \\ Department of Mathematics,\\
Cochin University of Science and Technology, Cochin -682 022, India. \\
email: noufalasharaf$@$gmail.com}}
\begin{abstract}
In this paper we focus on the continuous representation
on $\mathcal{S}(\mathbb{R})\subset L^2(\mathbb{R})$ with the operators $\frac{(PQ+QP)}{4}$ and $\frac{Q^2}{2}$ as generators given by $U[p,q]=\exp(-\frac{iqQ^2}{2})
\exp(i\log p\frac{PQ+QP}{4}).$ Action of the operator $\exp(PQ+QP)$ and the unitary equivalent operator on $\mathcal{S}(\mathbb{R})\subseteq L^2(\mathbb{R})$ of the multiplication operator in
$\mathcal{H}_{\frac{1}{2}}\oplus\mathcal{H}_{-\frac{1}{2}}$ is  
obtained.
\\
\textbf{Keywords}:Heisenberg Operators, Affine group, Reproducing Kernel Hilbert space, Unitary Equivalent.\\
\textbf{2010 Mathematics Subject Classification}: $\text{Primary} 47B32; \text{Secondary} 47B47$
\end{abstract}
\maketitle
\section{Introduction}
The general concepts and properties of continuous
representation theory have been developed by Klauder\cite{JRK} in 1963.
Basically any quantization procedure is a method  for relating the quantum 
problem to its classical counter part. The ingredients in the basic 
structure of quantum mechanics are the unit vectors in a Hilbert space 
corresponds to the states of the system and automorphisms among unit vectors.
The general postulates of Continuous Representation of an infinite dimensional 
Hilbert space is considered as in \cite{JRK}. As per the formulation in \cite{CRT},
we consider $U[l]$ be a family of unitary operators on $\mathcal{H}.$ 
Choose an arbitrary but fixed vector $\varphi_0\in \mathcal{H}$ called the 
fiducial vector. We can generate a subset of $\mathcal{H}$ by operating these
family of operators $U[l]$ on the fiducial vector. For any vector $\Psi\in\mathcal{H}$,
we can associate the complex, bounded continuous function 
$\psi(l) = \langle \Psi, U[l]\varphi_0 \rangle$ and the set 
$\{\psi(l) :\Psi\in\mathcal{H}\}$ is called the continuous representation
of the Hilbert space $\mathcal{H}.$ It is quite useful if we take $U[l]$ as 
the elements of a kinematic group and to interpret the labels as the classical
canonical coordinates $(p,q)$ for a system with one degree of freedom.
For the classical Cartesian pair $(p,q)\in \mathbb{R}^2,$ CRT has been developed by
Klauder and McKenna\cite{JMK}. In this development they have used the Weyl operators
$U[p,q] = \exp[ipQ -qP]$, where $Q$ as the familiar multiplication operator, $(Q f)(x)=xf(x)$ and $P$ 
the differential operator $(P f)(x)=-if^{\prime}(x)$ which are self adjoint operators satisfying the 
canonical commutation relation $[Q, P] = iI.$ This relation implies that $P$
and $Q$ must have a spectrum on the real line if they are self adjoint, if the spectrum of 
either $P$ or $Q$ is restricted, at least one of the operator looses the self
ad-jointness, say $Q$ is not self adjoint and therefore have no spectral resolution 
and the appropriate unitary operators cannot be the familiar Weyl operators.
This leads to consider a different pair of operators $P$ and $B$, both of which are
self adjoint and obey the commutation relation $[B,P] = iP$ as in \cite{CRT}.
For the coordinates with restriction $(p,q)\in\mathbb{R}^+\times\mathbb{R},$ 
consider the group of all linear transformations without reflections on the real line
known as affine group is taken and the unitary group elements are given as
$U[p,q] = \exp[-iqP]\exp[i\log pB],$ where $P$ and $B$ are self adjoint generators
which satisfy $[B,P] = iP.$ This group is formally "close" to the canonical group, because we obtain 
this group by multiplying the commutation relation $[Q,P] =iI$ by $P$ on either side
and identity $B = \frac{1}{2}(PQ+QP)$ which is self adjoint. So to study the 
continuous representation of $\mathcal{H}$ using this family of operators we want to 
understand the action of $\exp(imP)$ and  $\exp(PQ+QP)$. In this paper we prove 
the action of these operators. The following proposition proves an important identity 
which is useful in the series representation of $\exp(QP).$

\begin{proposition}\label{first}
For the Heisenberg operators $Q$ and $P$ \[QP(Q^nP^n)=Q^{n+1}P^{n+1}+inQ^nP^n,\hspace{0.2cm}\text{for all}\hspace{0.2cm} n\in\mathbb{N}\]
\end{proposition}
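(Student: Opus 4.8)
The plan is to extract from the statement the single nontrivial ingredient --- how the factor $P$ immediately to the right of $QP$ migrates through the block $Q^n$ --- and to treat the rest as bookkeeping with the associativity of operator multiplication. Writing $QP(Q^nP^n)=Q\bigl(PQ^n\bigr)P^n$, I would first establish the auxiliary identity
\[
PQ^n=Q^nP+inQ^{n-1},\qquad n\in\mathbb{N},
\]
and then multiply it on the left by $Q$ and on the right by $P^n$; this immediately yields $QPQ^nP^n=Q^{n+1}P^{n+1}+inQ^nP^n$, which is the assertion. Thus the proposition is equivalent to this one commutation relation, and no further manipulation of the $P^n$ tail is required.

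To prove the auxiliary identity I would induct on $n$. The base case $n=1$ is precisely the canonical commutation relation between $Q$ and $P$ recorded in the introduction. For the inductive step I would write $PQ^{n+1}=(PQ^n)Q$, substitute the inductive hypothesis for $PQ^n$, and then apply the $n=1$ relation once more to push the single remaining $P$ past the final factor $Q$. The term carried over from the hypothesis, namely $inQ^{n-1}\cdot Q=inQ^n$, and the commutator term freshly produced in this last step are both scalar multiples of $Q^n$, and they combine to give coefficient $i(n+1)$, closing the induction. A slicker alternative avoids the explicit recursion altogether: expanding $[P,Q^n]=\sum_{k=0}^{n-1}Q^k[P,Q]Q^{n-1-k}$ and using that $[P,Q]$ is a scalar multiple of the identity (hence central), all $n$ summands coincide and the commutator collapses to $n[P,Q]Q^{n-1}$, giving the stated $Q^{n-1}$ coefficient in one stroke.

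I expect the only genuine obstacle to be the correct determination of the scalar and its sign in the elementary commutator, since this single constant is exactly what gets amplified to the coefficient $in$ and then propagates unchanged through both the induction and the final left/right multiplication by $Q$ and $P^n$. Accordingly I would pin it down carefully at the base step directly from the definitions $(Qf)(x)=xf(x)$ and $(Pf)(x)=-if'(x)$, compute $[Q,P]f$ on a test function, and only then feed the result into the induction. Once that scalar is fixed, everything downstream is purely formal, and one could equally well phrase the whole argument as a single induction on the proposition itself, bypassing the separately stated lemma.
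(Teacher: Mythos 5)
Your route is genuinely different from the paper's. The paper proves the identity by inducting directly on the full statement: for the inductive step it writes $QP(Q^{k+1}P^{k+1})=Q(PQ)Q^kP^kP$, substitutes $PQ=QP+i$, and then invokes the hypothesis inside the bracket. You instead isolate the single commutation fact $PQ^n=Q^nP+inQ^{n-1}$, prove it separately (by induction, or in one stroke via $[P,Q^n]=\sum_{k=0}^{n-1}Q^k[P,Q]\,Q^{n-1-k}=n[P,Q]Q^{n-1}$, using that $[P,Q]$ is central), and then obtain the proposition by one left multiplication by $Q$ and one right multiplication by $P^n$. This is more modular: it makes transparent that the proposition is nothing but the commutator $[P,Q^n]$ in disguise, whereas the paper's induction re-derives that commutator implicitly at every step.

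There is, however, one concrete point where your plan, executed faithfully, would fail to land on the stated identity. The base case you need is $PQ=QP+i$, that is $[P,Q]=iI$. This is \emph{not} ``precisely the canonical commutation relation recorded in the introduction'': the introduction states $[Q,P]=iI$, which is its negative. Worse, your own prescription for certifying the sign --- computing on a test function from $(Qf)(x)=xf(x)$ and $(Pf)(x)=-if'(x)$ --- gives $[Q,P]f=if$, hence $PQ=QP-i$, hence $PQ^n=Q^nP-inQ^{n-1}$, and your method then outputs $QP(Q^nP^n)=Q^{n+1}P^{n+1}-inQ^nP^n$, the \emph{opposite} sign to the proposition. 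The paper's proof arrives at $+in$ only because it writes $Q(PQ)P=Q(QP+i)P$, silently adopting $[P,Q]=iI$ in conflict with its own stated convention and definitions; so the sign you rightly identified as the crux is genuinely inconsistent within the paper itself. Your deduction from the auxiliary identity as you wrote it is valid, but your announced way of verifying that identity's base case would refute it rather than confirm it. To prove the proposition as printed you must take $PQ=QP+i$ as the paper does; once that convention is fixed, your argument goes through verbatim and is the cleaner of the two.
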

Proof :For $n=1,$ $QP(QP)=Q(PQ)P=Q(QP+i)P=Q^2P^2+iQP$, assume the result is true for $n=k,$ $QP(Q^kP^k)=Q^{k+1}P^{k+1}+ikQ^kP^k$
\begin{align*} 
QP(Q^{k+1}P^{k+1})&=QP(QQ^{k}P^{k}P)=Q(PQ)Q^kP^kP=Q(QP+i)Q^kP^kP\\
&=Q[Q^{k+1}P^{k+1}+ikQ^kP^k]P+iQ^{k+1}P^{k+1}\\
&=Q^{k+2}P^{k+2}+i(k+1)Q^{k+1}P^{k+1}
\end{align*}
by mathematical induction $QP(Q^nP^n)=Q^{n+1}P^{n+1}+inQ^nP^n$ 
for $n\in\mathbb{N}$
When we focus on the action of the operator
$\exp(QP) = I+\frac{QP}{1!}+\frac{(QP)^2}{2!}+\frac{(QP)^3}{3!}+\cdots$ 
we need to handle the operators of the form $(QP)^n$ for $n\in\mathbb{N}$,
the following identity gives an easy procedure to convert those operators as the sum
of operators of the type $Q^nP^n$ and a method of computing the coefficients is also given. Proposition \ref{first}
will help us to compute the higher powers of $QP.$ For simplicity we consider the differentiation operator
$D,$ and we replace  $P = -iD.$
\begin{proposition}\label{imprp}
For $n\in\mathbb{N},$ the operators $Df(x)=f^{\prime}(x)$ and $Qf(x)=xf(x),$ 
\begin{equation*}(QD)^n=\sum_{r=1}^{\infty}a_{n,r}Q^rD^r,\hspace{0.2cm}where\hspace{0.2cm} a_{n,r}=\frac{1}{(r-1)!}\sum_{k=1}^{r}\binom{r-1}{k-1}(-1)^{r-k}k^{n-1}
\end{equation*}
\end{proposition}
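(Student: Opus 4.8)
The plan is to prove the identity by induction on $n$, using a differentiation analogue of Proposition \ref{first} as the engine. Writing $P=-iD$, the relevant operator recurrence is
\[
QD\,(Q^rD^r)=Q^{r+1}D^{r+1}+r\,Q^rD^r ,
\]
which is the differentiation form of Proposition \ref{first} and follows directly from the Leibniz rule $DQ^r=Q^rD+rQ^{r-1}$. This is the only algebraic input needed: it records exactly how a single factor of $QD$ acts on a normal-ordered block $Q^rD^r$.

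For the base case $n=1$ the claim reads $QD=a_{1,1}QD$, and the formula gives $a_{1,1}=\tfrac{1}{0!}\binom{0}{0}(-1)^0 1^0=1$. Assuming $(QD)^n=\sum_{r\ge 1}a_{n,r}Q^rD^r$, I would multiply on the left by $QD$ and apply the recurrence termwise,
\[
(QD)^{n+1}=\sum_{r\ge1}a_{n,r}\bigl(Q^{r+1}D^{r+1}+r\,Q^rD^r\bigr),
\]
then reindex the first sum and collect the coefficient of $Q^rD^r$. This shows that the claim for $n+1$ is equivalent to the single recurrence
\[
a_{n+1,r}=a_{n,r-1}+r\,a_{n,r},
\]
together with the boundary conditions $a_{n,r}=0$ for $r<1$ or $r>n$.

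The remaining, and genuinely the main, obstacle is purely combinatorial: one must verify that the closed form $a_{n,r}=\frac{1}{(r-1)!}\sum_{k=1}^r\binom{r-1}{k-1}(-1)^{r-k}k^{n-1}$ satisfies this recurrence and the boundary conditions. The vanishing $a_{n,r}=0$ for $r>n$ follows by recognizing the inner sum as an $(r-1)$st forward difference of the degree-$(n-1)$ polynomial $k\mapsto k^{n-1}$, which annihilates it once $r-1>n-1$; this is what collapses the formally infinite sum to $\sum_{r=1}^n$. For the recurrence itself I would substitute the closed form into $a_{n,r-1}+r\,a_{n,r}$ and simplify using $\binom{r}{k}k=r\binom{r-1}{k-1}$ to absorb the factor $r$, matching the result against $a_{n+1,r}$.

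A cleaner alternative that I would actually favor bypasses this bookkeeping. Since each side is a polynomial-coefficient differential operator, it suffices to test the identity on the monomials $x^m$, $m\in\mathbb{N}_0$. Here $QD\,x^m=m\,x^m$, so $(QD)^nx^m=m^n x^m$, while $Q^rD^r x^m=(m)_r x^m$ with $(m)_r=m(m-1)\cdots(m-r+1)$. The claim therefore reduces to the classical expansion $m^n=\sum_{r}a_{n,r}(m)_r$ of a power in the falling-factorial basis, whose coefficients are precisely the Stirling numbers of the second kind; identifying the stated closed form with $S(n,r)$ again via $\binom{r}{k}k=r\binom{r-1}{k-1}$ finishes the argument. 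Because the falling factorials $(m)_r$ are linearly independent as functions of $m$, agreement on every monomial forces the two differential operators to coincide, hence the identity holds on all of $\mathcal{S}(\mathbb{R})$.
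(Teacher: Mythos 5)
Your proposal is correct, and it really contains two arguments. The first is the paper's own route: induction on $n$ via the normal-ordering identity $QD(Q^rD^r)=Q^{r+1}D^{r+1}+rQ^rD^r$, reindexing the sum, and reduction of the inductive step to the coefficient recurrence $a_{n+1,r}=a_{n,r-1}+r\,a_{n,r}$. But note that the paper stops exactly there: it records this recurrence as equation (\ref{recurse}), checks $a_{n,1}=1$, and then merely illustrates it with machine-computed values of $a_{2,r}$ and $a_{3,r}$; it never verifies that the stated closed form for $a_{n,r}$ actually satisfies the recurrence. You correctly single out that verification as the genuine combinatorial content of the proposition, and your plan to absorb the factor $r$ through $k\binom{r}{k}=r\binom{r-1}{k-1}$ is exactly how it goes, so on this route your write-up would be more complete than the paper's. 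Your second argument is genuinely different from the paper and is the one worth keeping: testing both sides on monomials turns the operator identity into the classical expansion $m^n=\sum_r S(n,r)\,(m)_r$ of powers in falling factorials, identifies $a_{n,r}$ as the Stirling numbers of the second kind, and explains for free why the formally infinite sum terminates at $r=n$ (the $(r-1)$st difference of the degree-$(n-1)$ polynomial $k\mapsto k^{n-1}$ vanishes once $r>n$) --- a point the paper never addresses at all. The one step to tighten there is the last one: linear independence of the $(m)_r$ gives uniqueness of expansion coefficients, but to pass from agreement on monomials to equality of operators on $\mathcal{S}(\mathbb{R})$ the cleanest argument is that the difference of the two sides is a finite-order differential operator $\sum_k p_k(x)D^k$ with polynomial coefficients which kills all polynomials; applying it to $(x-x_0)^j$ and evaluating at $x_0$ gives $j!\,p_j(x_0)=0$ for every $j$ and every $x_0$, so the difference is the zero operator and the identity holds on all Schwartz functions.
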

Proof: 
 For $n=1,$\[a_{1,r}=\frac{1}{(r-1)!}\sum_{k=1}^{r}\binom{r-1}{k-1}(-1)^{r-k}\]
$a_{1,1}=\frac{1}{0!}\cdot 1=1$ and for $r>1$\[a_{1,r}=\frac{1}{(r-1)!}\sum_{k=1}^{r}\binom{r-1}{k-1}(-1)^{r-k}=
\frac{1}{(r-1)!}(1-1)^{r-1}=0\]
Therefore \[\sum_{r=1}^{\infty}a_{1,r}Q^rD^r=a_{1,1}QD=QD\]
Assume the result is true for $n=m$ i.e., $\displaystyle(QD)^m=\sum_{r=1}^{\infty}a_{m,r}Q^rD^r.$
\begin{align*}
(QD)^{m+1}&=QD(QD)^m=QD\left[\sum_{r=1}^{\infty}a_{m,r}Q^rD^r\right]\\
&=\sum_{r=1}^{\infty}a_{m,r}QD(Q^rD^r)=\sum_{r=1}^{\infty}a_{m,r}\left[Q^{r+1}D^{r+1}+rQ^rD^r\right]\\
&=\sum_{r=1}^{\infty}a_{m,r}Q^{r+1}D^{r+1}+r\sum_{r=1}^{\infty}a_{m,r}Q^rD^r\\
&=\sum_{r=2}^{\infty}a_{m,r-1}Q^{r}D^{r}+r\sum_{r=2}^{\infty}a_{m,r}Q^rD^r+a_{m,1}QD\\
&=a_{m,1}QD+\sum_{r=2}^{\infty}\left(a_{m,r-1}+ra_{m,r}\right)Q^rD^r
\end{align*}
Since $a_{n,1}=1$ for all $n,$ in particular $a_{m,1}=a_{m+1,1}=1.$ For $r=2,3,\cdots$
\begin{equation}\label{recurse}
 a_{m+1,r}=a_{m,r-1}+ra_{m,r}\end{equation}
For n = 2,  $\displaystyle(QD)^2=\sum_{r=1}^{\infty}a_{2,r}Q^rD^r$, where
$a_{2,1} =\sage{a(2,1).n(digits=1)},\, a_{2,2}=\sage{a(2,2).n(digits=1)},\, a_{2,3} = \sage{a(2,3).n(digits=1)},\, a_{2,4}=\sage{a(2,4).n(digits=1)},$
etc.
For n = 3, $\displaystyle(QD)^3=\sum_{r=1}^{\infty}a_{3,r}Q^rD^r$, where
$a_{3,1} =\sage{a(3,1).n(digits=1)},\, a_{3,2}=\sage{a(3,2).n(digits=1)},\, a_{3,3} = \sage{a(3,3).n(digits=1)},\, a_{3,4}=\sage{a(3,4).n(digits=1)},$
etc. Recursive relation (\ref{recurse}) can be used to find the coefficients of $(QD)^n$ using the coefficients
of $(QD)^{n-1}$ for any $n\in\mathbb{N}.$
For any $f\in\mathcal{S}(\mathbb{R})\subset L^2(\mathbb{R})$, action of the operators $\exp(mQD),$ $\exp(imP)$,
$\exp(imQ^2)$ 
will be $f\mapsto f(e^mt),$ $f\mapsto f(x+m),$ $f\mapsto \exp(imx^2)f(x)$
respectively. All but the first one are easy observations, for proving the first one we use 
the proposition (\ref{imprp}). Look at the series expansion of $\exp(mQD)=I+\frac{mQD}{1!}+\frac{(mQD)^2}{2!}+\cdots$
It's action on an element of $\mathcal{S}(\mathbb{R})$ is 
{\small
\begin{align*}
[e^{mQD}f](t)&=f(x)+\frac{m}{1!}QDf(x)+\frac{m^2}{2!}(QD)^2f(x)+\frac{m^3}{3!}(QD)^3f(x)+\cdots \\
&=f(x)+\frac{m}{1!}QDf(x)+\frac{m^2}{2!}\left[\sum_{r=1}^{\infty}a_{2,r}Q^rD^r\right]f(x)\\
&\hspace{1cm}+\frac{m^3}{3!}\left[\sum_{r=1}^{\infty}a_{3,r}Q^rD^r\right]f(x)+\cdots\\
&=f(x)+\frac{m}{1!}QDf(x)+\frac{m^2}{2!}\left[\sum_{r=1}^{\infty}\frac{1}{(r-1)!}\sum_{k=1}^{r}\binom{r-1}{k-1}(-1)^{r-k}kQ^rD^r\right]\\
&\hspace{1cm}f(x)+\frac{m^3}{3!}\left[\sum_{r=1}^{\infty}\frac{1}{(r-1)!}\sum_{k=1}^{r}\binom{r-1}{k-1}(-1)^{r-k}k^2Q^rD^r\right]f(x)+\cdots\\
&=f(x)+[\frac{m}{1!}+\frac{m^2}{2!}+\frac{m^3}{3!}+\cdots+\frac{m^n}{n!}a_{n,1}+\cdots]QDf(x)+[\frac{m^2}{2!}+\frac{3m^3}{3!}\\
&\hspace{1cm}+\frac{7m^4}{4!}+\cdots+\frac{m^n}{n!}a_{n,2}+\cdots]Q^2D^2f(x)+[\frac{m^3}{3!}+\frac{6m^4}{4!}+\frac{25m^5}{5!}\\
&\hspace{1cm}+\cdots+\frac{m^n}{n!}a_{n,3}+\cdots]Q^3D^3f(x)+[\frac{m^4}{4!}+\frac{10m^5}{5!}+\frac{146m^6}{6!}+\cdots
\\&\hspace{1cm}+\frac{m^n}{n!}a_{n,4}+\cdots]Q^4D^4f(x)+\cdots+[\frac{m^r}{r!}+\cdots+\frac{m^n}{n!}a_{n,r}+\cdots]\\
&\hspace{1.2cm}Q^rD^rf(x)+\cdots\\
&=f(x)+\left[\sum_{n=1}^{\infty}\frac{m^n}{n!}a_{n,1}\right]QDf(x)+\left[\sum_{n=1}^{\infty}\frac{m^n}{n!}a_{n,2}\right]Q^2D^2f(x)\\
&\hspace{1cm}+\left[\sum_{n=1}^{\infty}\frac{m^n}{n!}a_{n,3}\right]Q^3D^3f(x)+\left[\sum_{n=1}^{\infty}\frac{m^n}{n!}a_{n,4}\right]Q^4D^4f(x)+\cdots\\
&\hspace{1cm}+\left[\sum_{n=1}^{\infty}\frac{m^n}{n!}a_{n,r}\right]Q^rD^rf(x)+\cdots\\
&=f(x)+\left[\sum_{n=1}^{\infty}\frac{m^n}{n!}\right]QDf(x)+\left[\sum_{n=1}^{\infty}\frac{m^n}{n!}
\left(2^{n-1}-1\right)\right]Q^2D^2f(x)\\
&\hspace{1cm}+\left[\sum_{n=1}^{\infty}\frac{m^n}{n!}\frac{1}{2!}\left(3^{n-1}-\binom{2}{1}2^{n-1}+1\right)\right]Q^3D^3f(x)
\end{align*}
\begin{align*}
&\hspace{1cm}+\left[\sum_{n=1}^{\infty}\frac{m^n}{n!}\frac{1}{3!}\left(4^{n-1}-\binom{3}{1}3^{n-1}+\binom{3}{2}2^{n-1}-1\right)\right]Q^4D^4f(x)+\cdots\\
&\hspace{1cm}+\sum_{n=1}^{\infty}\frac{m^n}{n!}\frac{1}{(r-1)!}[r^{n-1}-\binom{r-1}{1}(r-1)^{n-1}+\binom{r-1}{2}(r-2)^{n-1}+\cdots\\
&\hspace{1.3cm}+(-1)^{r-2}\binom{r-1}{r-2}2^{n-1}+1]Q^rD^rf(x)+\cdots\\
&=f(x)+\frac{(e^m-1)}{1!}QDf(x)+\frac{1}{2!}\left[\sum_{n=1}^{\infty}\left(\frac{(2m)^n}{n!}-\frac{2m^n}{n!}\right)\right]Q^2D^2f(x)+\frac{1}{2!}\\
&\hspace{1cm}\left[\sum_{n=1}^{\infty}\left(\frac{1}{3}\frac{(3m)^n}{n!}-\frac{(2m)^n}{n!}+\frac{m^n}{n!}\right)\right]Q^3D^3f(x)+\frac{1}{3!}\\
&\hspace{1cm}\left[\sum_{n=1}^{\infty}\left(\frac{1}{4}\frac{(4m)^n}{n!}-\frac{(3m)^n}{n!}+\frac{3}{2}\frac{(2m)^n}{n!}-\frac{m^n}{n!}\right)\right]Q^4D^4f(x)+\cdots\\
&\hspace{1cm}+\frac{1}{(r-1)!}\left[\sum_{n=1}^{\infty}\left(\frac{1}{r}\frac{(rm)^n}{n!}-\frac{(rm-m)^n}{n!}+\cdots+(-1)^{r-1}
\frac{m^n}{n!}\right)\right]Q^rD^rf(x)\\&\hspace{1cm}+\cdots\\
&=f(x)+\frac{1}{1!}\left[(e^m-1)\right]QDf(x)+\frac{1}{2!}\left[(e^{2m}-1)-(e^m-1)\right]Q^2D^2f(x)\\
&\hspace{1cm}+\frac{1}{3!}\left[(e^{3m}-1)-3(e^{2m}-1)+3(e^m-1)\right]Q^3D^3f(x)\\
&\hspace{1cm}+\frac{1}{4!}\left[(e^{4m}-1)-4(e^{3m}-1)+6(e^{2m}-1)-4(e^m-1)\right]Q^4D^4f(x)+\cdots\\
&=f(x)+\frac{(e^m-1)}{1!}xf^{\prime}(x)+\frac{(e^m-1)^2}{2!}x^2f^{\prime\prime}(x)+\frac{(e^m-1)^3}{3!}x^2f^{\prime\prime\prime}(x)+\cdots\\
&\hspace{1cm}+\frac{(e^m-1)^r}{r!}x^rf^{(r)}(x)+\cdots\\
&=f(x+(e^m-1)x)=f(e^mx)
\end{align*}
}
Note that, actions of $\left[\exp(imP)\cdot \exp(inQ)\right]$ and $\left[\exp(imQ)\cdot \exp(inP)\right]$ are not the same.
First one is $f\mapsto e^{in(x+m)}f(x+m)$ but the second one is $f\mapsto e^{imx}f(x+n).$
We consider the family of unitary operators \begin{align}U[p,q]:&=\exp\left(-\frac{i}{2}q \, Q^{2} \right)\exp\left(\frac{i}{4}                                        
\log p\,[P Q +Q P]\right)\\&=\exp\left(-\frac{i}{2}q \, Q^{2} \right)\exp\left(\frac{i}{4}\log p (2Q P-i)\right)\end{align}
members in this family are actually 
\begin{align*}\exp\left(-\frac{i}{2}qQ^{2}\right)\exp\left(\frac{\log p}{2} \, Q D+\frac{\log p}{4}\right)
=p^{\frac{1}{4}}\exp\left(-\frac{i}{2}qQ^{2}\right)\exp\left(\frac{ \log p}{2} \, Q D\right)\end{align*}
and the action of each member of the family will be $f\mapsto p^{\frac{1}{4}}\exp\left(-\frac{i}{2}q \, x^{2} \right)f(\sqrt{p}x)$ for any $f\in\mathcal{S}(\mathbb{R}).$
\section{ Space of Analytic functions $\mathcal{H}_{\frac{1}{2}}\oplus\mathcal{H}_{-\frac{1}{2}}$}
Consider $\mathcal{H}_{\frac{1}{2}}$ as the space of analytic functions defined on the open upper half-plane and square integrable with respect to the measure $d\mu_\frac{1}{2}(z)=\left(\frac{z-\overline{z}}{2i}\right)^\frac{1}{2}\frac{dz\wedge d\overline{z}}{2i}$, and $\mathcal{H}_{-\frac{1}{2}}$ as the space of analytic functions defined on the open upper half-plane and square integrable with respect to the measure $d\mu_{-\frac{1}{2}}(z)=\left(\frac{z-\overline{z}}{2i}\right)^{-\frac{1}{2}}\frac{dz\wedge d\overline{z}}{2i}.$ Note that $\mathcal{H}_{\frac{1}{2}}$ and $\mathcal{H}_{-\frac{1}{2}}$
are Hilbert spaces with the innerproducts $\langle f, g\rangle =  \int f(z) \overline{g(z)}\, d\mu_\frac{1}{2}(z)$ and $\langle f, g\rangle =  \int f(z) \overline{g(z)}\, d\mu_{-\frac{1}{2}}(z)$ respectively. By Riesz representation theorem there exist a reprodcing kernel for both $\mathcal{H}_{\frac{1}{2}}$ and $\mathcal{H}_{-\frac{1}{2}}$ i.e., for each elements $z\in\mathbb{C},$ functions $\rho_{\frac{1}{2}}^z$ and $\rho_{-\frac{1}{2}}^z$ exists in $\mathcal{H}_{\frac{1}{2}}$
and $\mathcal{H}_{-\frac{1}{2}}$ respectively with the properties \begin{align}
\int \rho_{\frac{1}{2}}^z(z^\prime) f(z^\prime)d\mu_\frac{1}{2}(z) &= f(z^\prime)                                       \\
\int \rho_{-\frac{1}{2}}^z(z^\prime) f(z^\prime)d\mu_{-\frac{1}{2}}(z) &= f(z^\prime)                                       
\end{align}
note that the space $\mathcal{H}_{\frac{1}{2}}$ is unitary equivalent to the space $D_{\frac{1}{2}}$ of analytic functions on the open unit disk which are square integrable with respect to the measre $d\nu_\frac{1}{2}(\omega) = \left(\frac{1-\omega\overline{\omega}}{2} \right)^{\frac{1}{2}}\, dxdy,$ also the space $\mathcal{H}_{-\frac{1}{2}}$ is unitary equivalent to the space $D_{-\frac{1}{2}}$ of analytic functions on the open unit disk which are square integrable with respect to the measre $d\nu_{-\frac{1}{2}}(\omega) = \left(\frac{1-\omega\overline{\omega}}{2} \right)^{-\frac{1}{2}}\, dxdy.$ Consider the unitary transforms $B^{\frac{1}{2}} : \mathcal{H}_{\frac{1}{2}}\rightarrow D_{\frac{1}{2}}$ and  $B^{-\frac{1}{2}} : \mathcal{H}_{-\frac{1}{2}}\rightarrow D_{-\frac{1}{2}}$ defined by $(B^{\frac{1}{2}}f)(\omega) = 2^{\frac{5}{4}}\left(\frac{1-\omega}{i}\right)^{-\frac{5}{2}}f\left(i\frac{1+\omega}{1-\omega}\right),$ $(B^{-\frac{1}{2}}f)(\omega) = 2^{\frac{3}{4}}\left(\frac{1-\omega}{i}\right)^{-\frac{5}{2}}f\left(i\frac{1+\omega}{1-\omega}\right),$ the system of functions $u_n(\omega) = \dfrac{2^\frac{1}{4}}{\sqrt{\pi}}\sqrt{\frac{\Gamma (n+5/2)}{\Gamma(n+1)\Gamma (3/2)}} \omega^n$  forms a complete orthogonal system for $D_{\frac{1}{2}}$ and the functions $v_n(\omega) = \dfrac{2^{-\frac{1}{4}}}{\sqrt{\pi}}\sqrt{\frac{\Gamma (n+3/2)}{\Gamma(n+1)\Gamma (1/2)}} \omega^n$  forms a complete orthogonal system for $D_{-\frac{1}{2}}.$ According to Bergmann,
\[k_{\frac{1}{2}}^\omega(\omega^\prime)= \sum_{n=0}^{\infty} u_n(\omega)\overline{u_n(\omega^\prime) }= \sum_{n=0}^{\infty}\dfrac{2^\frac{1}{2}}{\pi}\frac{\Gamma (n+5/2)}{\Gamma(n+1)\Gamma (3/2)} ({\omega\bar{\omega}^\prime})^n \]
will be the reproducing kernel for the space $D_{\frac{1}{2}},$ and 
\[k_{-\frac{1}{2}}^\omega(\omega^\prime)= \sum_{n=0}^{\infty} v_n(\omega)\overline{v_n(\omega^\prime) }= \sum_{n=0}^{\infty}\dfrac{2^{-\frac{1}{2}}}{\pi}\frac{\Gamma (n+3/2)}{\Gamma(n+1)\Gamma (1/2)} ({\omega\bar{\omega}^\prime})^n \]
will be the reproducing kernel for the space $D_{-\frac{1}{2}}, $ i.e., for any $g \in D_{\frac{1}{2}},$ we have
\begin{equation}
  \int k_{\frac{1}{2}}^\omega(\omega^\prime)  g(\omega^\prime)\,d\nu_\frac{1}{2}(\omega^\prime) = g(\omega)
\end{equation}
and for the element $g\in  D_{-\frac{1}{2}} $ we have $\int  k_{-\frac{1}{2}}^\omega(\omega^\prime) g(\omega^\prime)\,d\nu_\frac{1}{2}(\omega^\prime) = g(\omega)
.$ It  is  to be noted that \[k_{\frac{1}{2}}^\omega(\omega^\prime) = \frac{3}{\sqrt{2}\pi} \,\,_2 F_1\left(\frac{5}{2}, 1;1;\omega\bar{\omega}^\prime\right)= \frac{3}{8\pi}\left( \frac{1-\omega\bar{\omega}^\prime}{2}\right)^{-\frac{5}{2}}\] and 
\[k_{-\frac{1}{2}}^\omega(\omega^\prime)=\frac{1}{2\pi\sqrt{2}} \,\,_2 F_1\left(\frac{5}{2}, 1;1;\omega\bar{\omega}^\prime\right) = \frac{1}{8\pi}\left( \frac{1-\omega\bar{\omega}^\prime}{2}\right)^{-\frac{3}{2}}\]
by unitary transform $B^{\frac{1}{2}},$ we have a reproducing kernel $\rho_{\frac{1}{2}}^z(z^\prime)=\frac{3}{8\pi}\left( \frac{z-\bar{z}^\prime}{2i}\right)^{-\frac{5}{2}}$ for the Hilbert space $\mathcal{H}_{\frac{1}{2}}$ and a reproducing kernel $\rho_{-\frac{1}{2}}^z(z^\prime)=\frac{1}{8\pi}\left( \frac{z-\bar{z}^\prime}{2i}\right)^{-\frac{3}{2}}$ for the Hilbert space  $\mathcal{H}_{-\frac{1}{2}}.$ Inverses of the unitary transforms are the maps $S^{\frac{1}{2}} : D_{\frac{1}{2}}\rightarrow \mathcal{H}_{\frac{1}{2}} $ and $S^{-\frac{1}{2}} : D_{-\frac{1}{2}}\rightarrow \mathcal{H}_{-\frac{1}{2}} $ respectively and are defined as 
$\left(S^{\frac{1}{2}}g\right)(z)  = 2^{\frac{5}{4}}(z+i)^{-\frac{5}{2}} g\left(\frac{z-i}{z+i}\right)$ and $ \left(S^{-\frac{1}{2}}g\right)(z)  = 2^{\frac{3}{4}}(z+i)^{-\frac{3}{2}} g\left(\frac{z-i}{z+i}\right)$ also the complete orthogonal bases for $\mathcal{H}_{\frac{1}{2}}$ and $\mathcal{H}_{-\frac{1}{2}}$ will be \[\ell^{\frac{1}{2}}_n(z) = 4\sqrt{\frac{\Gamma(n+\frac{5}{2})}{\pi^{\frac{3}{2}}n!}}\left(\frac{z-i}{z+i}\right)^n(z+i)^{-\frac{5}{2}}\] and \[\ell^{-\frac{1}{2}}_n(z) = \sqrt{\frac{2\Gamma(n+\frac{3}{2})}{\pi^{\frac{3}{2}}n!}}\left(\frac{z-i}{z+i}\right)^n(z+i)^{-\frac{3}{2}}\]
A large class of operators on $ \mathcal{H}_{\frac{1}{2}}$ can be expressed in terms of kernels, i.e., \[(Af)(z)  = \int h(z,\bar{z}^\prime)f(z^\prime)d\mu_{\frac{1}{2}}(z^\prime)\]
\section{Isometry between $\mathcal{S}(\mathbb{R})\subseteq L^2(\mathbb{R})$ and $\mathcal{H}_{\frac{1}{2}}\oplus\mathcal{H}_{-\frac{1}{2}}$}\large
Denote $\mathcal{S}_o(\mathbb{R}), \,\mathcal{S}_e(\mathbb{R})$ the collection of square integrable odd and even Schwartz class functions respectively. Choose 
$\varphi_o(p)=pe^{-p^2/2}\in \mathcal{S}_o(\mathbb{R})$  and $\varphi_e(p)=p^2e^{-p^2/2}\in \mathcal{S}_e(\mathbb{R})$ which satisfy the admissibility
condition. For any $\psi\in \mathcal{S}_o(\mathbb{R})$ assign \[\psi\mapsto \frac{1}{\sqrt{C_{\varphi_o}}}\langle\psi,U_1(a,b)\varphi_o\rangle=\frac{1}{2\pi^{3/4}}\langle\psi,U_1(a,b)\varphi_o\rangle\] and 
for $\psi\in \mathcal{S}_e(\mathbb{R})$ assign \[\psi\mapsto \frac{1}{\sqrt{C_{\varphi^e}}}\langle\psi,U_1(a,b)\varphi_e\rangle=\frac{1}{\sqrt{2}\pi^{3/4}}\langle\psi,U_1(a,b)\varphi_o\rangle\]
\begin{align*}
\frac{1}{a^{3/4}\sqrt{C_{\varphi_o}}}\langle\psi, U_1(a,b)\varphi_o\rangle&=\frac{1}{2\pi^{3/4}}\int_{-\infty}^{\infty}p\overline{e^{-i\bar{z}p^2/2}}\psi(p)dp\\
&= \frac{1}{2\pi^{3/4}}\int_{-\infty}^{\infty}pe^{izp^2/2}\psi(p)dp \\
\text{and}\qquad\frac{1}{a^{5/4}\sqrt{C_{\varphi_e}}}\langle\psi,U_1(a,b)\varphi_e\rangle&=\frac{1}{\sqrt{2}\pi^{3/4}}\int_{-\infty}^{\infty}p^2\overline{e^{-i\bar{z}p^2/2}}\psi(p)dp\\
&= \frac{1}{\sqrt{2}\pi^{3/4}}\int_{-\infty}^{\infty}p^2e^{izp^2/2}\psi(p)dp 
\end{align*}
\noindent with $z=b+ia.$ This suggests two transforms $A_o:\mathcal{S}_o(\mathbb{R})\rightarrow \mathcal{H}_{-\frac{1}{2}}, \,A_e:\mathcal{S}_e(\mathbb{R})\rightarrow \mathcal{H}_{\frac{1}{2}}$
\begin{align}\label{transformsoddeven}
(A_o\psi)(z)&=\frac{1}{2\pi^{3/4}}\int_{-\infty}^{\infty}pe^{izp^2/2}\psi(p)dp=\langle\psi, \varphi_o^z\rangle\\
(A_e\psi)(z)&=\frac{1}{\sqrt{2}\pi^{3/4}}\int_{-\infty}^{\infty}p^2e^{izp^2/2}\psi(p)dp=\langle\psi, \varphi_e^z\rangle
 \end{align}
where $\varphi_o^z=\frac{1}{a^{3/4}\sqrt{C_{\varphi_o}}}U_1(a,b)\varphi_o$ and 
$\varphi_e^z=\frac{1}{a^{5/4}\sqrt{C_{\varphi_e}}}U_1(a,b)\varphi_e$ 
\begin{remark}
Every function in $\mathcal{S}(\mathbb{R})$ can be decomposed into even and odd parts $\psi=\psi_e+\psi_o$ with $\psi_e\in \mathcal{S}_e(\mathbb{R})$ and 
$\psi_o\in \mathcal{S}_o(\mathbb{R}),$ so we have a map $A:\mathcal{S}(\mathbb{R})\rightarrow \mathcal{H}_{\frac{1}{2}}\oplus\mathcal{H}_{-\frac{1}{2}}$
\end{remark}
\[\begin{pmatrix}
   \psi_e\\
   \psi_o
  \end{pmatrix}\xrightarrow{A} \begin{pmatrix}A_e\psi_e\\
                                      A_o\psi_o\end{pmatrix}\]
\begin{theorem}\label{aoisometry}
 $A_o:\mathcal{S}_o(\mathbb{R})\rightarrow H_{-1/2}$ is a unitary transform and the inverse transform is given by 
\[(A_o^{-1}f)(p)=\frac{1}{2\pi^{3/4}}\lim_{\sigma,\gamma\rightarrow\infty} \int_{R}pe^{-i\bar{z}p^2/2}f(z)a^{-1/2}dadb\]
where $R=\{z\in\mathbb{C}:|Re(z)|<\sigma \hspace{0.1cm}\text{and}\hspace{0.1cm} \frac{1}{\gamma}<Im(z)<\gamma\}$
\end{theorem}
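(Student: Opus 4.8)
The plan is to establish three things in turn: that $A_o$ maps into $\mathcal{H}_{-1/2}$ and is isometric, that it is onto, and finally that the displayed formula computes its inverse. Throughout I write $z=b+ia$ with $a=\operatorname{Im} z>0$, so that $\frac{z-\bar z}{2i}=a$ and the defining measure collapses to $d\mu_{-1/2}(z)=a^{-1/2}\,da\,db$; I also record the kernel as $\varphi_o^z(p)=\frac{1}{2\pi^{3/4}}\,p\,e^{-i\bar z p^2/2}$, so that $(A_o\psi)(z)=\langle\psi,\varphi_o^z\rangle$. Analyticity of $A_o\psi$ on the upper half plane is immediate by differentiating under the integral sign, the Gaussian factor $e^{-ap^2/2}$ together with the Schwartz decay of $\psi$ justifying the interchange.

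First I would prove the isometry by direct computation of $\|A_o\psi\|_{\mathcal{H}_{-1/2}}^2$. Expanding the modulus squared turns this into a fourfold integral in $p,p',a,b$; carrying out the $b$-integral produces $\int_{-\infty}^\infty e^{ib(p^2-p'^2)/2}\,db=4\pi\,\delta(p^2-p'^2)$. The key step is to treat this distribution via $\delta(p^2-p'^2)=\frac{1}{2|p|}\bigl[\delta(p-p')+\delta(p+p')\bigr]$ and to exploit that $\psi$ is odd: the two delta contributions turn out equal, since the sign from $p\,p'=-p^2$ in the $\delta(p+p')$ term is cancelled by $\overline{\psi(-p)}=-\overline{\psi(p)}$. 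This reduces the quantity to $\frac{1}{\pi^{1/2}}\int_{-\infty}^\infty |p|\,|\psi(p)|^2\bigl(\int_0^\infty e^{-ap^2}a^{-1/2}\,da\bigr)\,dp$, whose inner integral equals $\Gamma(1/2)/|p|=\sqrt{\pi}/|p|$; the powers of $\pi$ and the factor $|p|$ cancel exactly, leaving $\|A_o\psi\|^2_{\mathcal{H}_{-1/2}}=\|\psi\|^2_{L^2}$. This is precisely where the normalisation $C_{\varphi_o}=4\pi^{3/2}$ and the admissibility of $\varphi_o$ are doing their work.

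Next I would establish surjectivity through the reproducing kernel. The isometric image of $A_o$ is a closed subspace of $\mathcal{H}_{-1/2}$ and is itself a reproducing kernel Hilbert space with kernel $K(z,z')=\langle\varphi_o^{z'},\varphi_o^z\rangle_{L^2}$. Writing $w=z-\bar z'$ (which has positive imaginary part), this inner product reduces to $\frac{1}{4\pi^{3/2}}\int_{-\infty}^\infty p^2 e^{iwp^2/2}\,dp$, and the substitution $u=p^2$ turns it into $\frac{1}{4\pi^{3/2}}\Gamma(3/2)\bigl(\frac{-iw}{2}\bigr)^{-3/2}$. Rewriting $\frac{-iw}{2}=\frac{z-\bar z'}{2i}$ yields exactly $K(z,z')=\frac{1}{8\pi}\bigl(\frac{z-\bar z'}{2i}\bigr)^{-3/2}=\rho_{-\frac{1}{2}}^{z}(z')$, the reproducing kernel of $\mathcal{H}_{-1/2}$ recorded in Section~2. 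Since a reproducing kernel Hilbert space is determined by its kernel, the range of $A_o$ coincides with all of $\mathcal{H}_{-1/2}$, so $A_o$ is unitary.

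Finally, being unitary, $A_o$ has inverse equal to its adjoint. Computing $\langle A_o\psi,f\rangle_{\mathcal{H}_{-1/2}}=\langle\psi,A_o^*f\rangle_{L^2}$ and interchanging the order of integration gives $(A_o^*f)(p)=\frac{1}{2\pi^{3/4}}\int_{\text{UHP}} p\,e^{-i\bar z p^2/2}f(z)\,a^{-1/2}\,da\,db$, which is the claimed formula once one notes $e^{-i\bar z p^2/2}=\overline{e^{izp^2/2}}$. The main obstacle throughout is rigour of these interchanges: the integral defining $A_o^*f$ is only conditionally convergent, since the $b$-integral is oscillatory and the $a^{-1/2}$ weight must be controlled both near $a=0$ and as $a\to\infty$, which is exactly why the statement truncates to the rectangle $R$ and takes $\sigma,\gamma\to\infty$. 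I would handle this by first proving the formula on the dense set of $f=A_o\psi$ with $\psi$ compactly supported away from the origin, where Fubini applies on each $R$, and then passing to the limit using the isometry to control the tails; the same care legitimises the distributional $b$-integral used in the isometry step.
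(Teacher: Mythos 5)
Your proof is correct, and its first half --- computing $\|A_o\psi\|^2_{\mathcal{H}_{-1/2}}$ via the distributional identity $\int e^{ib(p^2-p'^2)/2}\,db = 4\pi\delta(p^2-p'^2)$, the splitting $\delta(p^2-p'^2)=\frac{1}{2|p|}\bigl[\delta(p-p')+\delta(p+p')\bigr]$, the parity cancellation, and the $a$-integral $\int_0^\infty e^{-ap^2}a^{-1/2}\,da=\sqrt{\pi}/|p|$ --- is exactly the paper's argument, and in fact more careful: the paper drops the factor $\frac{1}{2}$ in the delta splitting and several powers of $\pi$, which happen not to affect the outcome. Where you genuinely diverge is the second half. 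The paper does not prove surjectivity or derive the inverse formula; it takes the displayed formula for $A_o^{-1}$ as given and verifies that $\int|(A_o^{-1}f)(p)|^2\,dp = \|f\|^2_{\mathcal{H}_{-1/2}}$, reducing the double integral, via the Gaussian integral $\int_0^\infty u^{1/2}e^{-u(z'-\bar z)/2i}\,du = \Gamma(3/2)\left(\frac{z'-\bar z}{2i}\right)^{-3/2}$, to the reproducing property of $\rho_{-1/2}$. You use that very same Gaussian integral, but to a different end: you show $\langle\varphi_o^{z'},\varphi_o^z\rangle = \rho_{-1/2}^{z}(z')$, conclude that the range of $A_o$ is all of $\mathcal{H}_{-1/2}$ by uniqueness of the reproducing kernel Hilbert space attached to a kernel, and then obtain the inverse formula as the adjoint of a unitary. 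Your packaging is logically tighter: it establishes ontoness explicitly and \emph{derives} the stated formula, whereas the paper's two isometry computations, strictly speaking, never verify that the displayed operator actually inverts $A_o$ (i.e.\ that $A_o^{-1}A_o=I$); the paper's route, in exchange, checks the stated formula directly and is shorter. One caveat for your version: since $\mathcal{S}_o(\mathbb{R})$ is not complete, the claim that ``the isometric image is a closed subspace'' requires first extending $A_o$ by continuity to the closure of $\mathcal{S}_o(\mathbb{R})$ in $L^2(\mathbb{R})$ (the odd square-integrable functions) --- a point both you and the paper gloss over, and which also affects the literal meaning of ``unitary'' in the theorem statement.
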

\begin{proof}
 $\langle A_o\varphi,A_o\psi\rangle_{\frac{-1}{2}}$
\begin{align*}
&=\frac{1}{4\pi^{3/2}}\int_{-\infty}^{\infty}\int_{0}^{\infty}\int_{-\infty}^{\infty}pe^{-i\bar{z}p^2/2}\overline{\psi(p)}dp
\int_{-\infty}^{\infty}p^{\prime}e^{izp^{\prime 2}/2}\varphi(p^\prime)dp^\prime a^{-1/2}dadb\\
&=\frac{1}{4\pi^{3/2}}\iiint_{-\infty}^{\infty}\int_{0}^{\infty}pp^{\prime}e^{-i\bar{z}p^2/2}
e^{izp^{\prime 2}/2}\overline{\psi(p)}\varphi(p^\prime)a^{-1/2}dpdp^\prime dadb\\
&=\frac{1}{4\pi^{3/2}}\iiint_{-\infty}^{\infty}\int_{0}^{\infty}pp^{\prime}e^{i\frac{b}{2}(p^{\prime2}-p^2)}
e^{-\frac{a}{2}(p^{\prime2}+p^2)}\overline{\psi(p)}\varphi(p^\prime)a^{-1/2}dpdp^\prime dadb\\
&=\iint_{-\infty}^{\infty}\int_{0}^{\infty}pp^{\prime}
e^{-\frac{a}{2}(p^{\prime2}+p^2)}\overline{\psi(p)}\varphi(p^\prime)\delta(p^2-p^{\prime 2})a^{-1/2}dpdp^\prime da\\
&=\iint_{-\infty}^{\infty}\int_{0}^{\infty}pp^{\prime}\left[\frac{\delta(p-p^\prime)}{|p|}+\frac{\delta(p+p^\prime)}{|p|}\right]
e^{-\frac{a}{2}(p^{\prime2}+p^2)}\overline{\psi(p)}\varphi(p^\prime)\frac{dpdp^\prime da}{\sqrt{a}}\\
&=\int_{-\infty}^{\infty}\int_{0}^{\infty}pe^{-ap^2}\overline{\psi(p)}\varphi(p)a^{-1/2}dpda\\
&=\int_{-\infty}^{\infty}\overline{\psi(p)}\varphi(p)dp=\langle\varphi,\psi \rangle_{L^2(\mathbb{R})}\\
\end{align*}
\begin{align*}
\parallel A_o^{-1}f(p)\parallel^2&=\langle A_o^{-1}f(p),A_o^{-1}f(p) \rangle\\
&=\frac{1}{4\pi^{3/2}}\int \int p^2e^{i(z^{\prime}-\bar{z})p^2/2}f(z)\overline{f(z^{\prime})}d\mu_{-\frac{1}{2}}(z)d\mu_{-\frac{1}{2}}(z^{\prime})\\
\int_{-\infty}^{\infty}\parallel A_o^{-1}&f(p)\parallel^2 dp\\
&=\frac{1}{4\pi^{3/2}}\int \int \int_{-\infty}^{\infty}p^2e^{-p^2(\frac{z^{\prime}-\bar{z}}{2i})}dpf(z)\overline{f(z^{\prime})}d\mu_{-\frac{1}{2}}(z)
d\mu_{-\frac{1}{2}}(z^{\prime})\\
&=\frac{1}{4\pi^{3/2}}\int \int \int_{0}^{\infty}u^{1/2}e^{-u(\frac{z^{\prime}-\bar{z}}{2i})}duf(z)\overline{f(z^{\prime})}d\mu_{-\frac{1}{2}}(z)
d\mu_{-\frac{1}{2}}(z^{\prime})
\end{align*}
\begin{align*}
&=\iint \frac{1}{8\pi}\left(\frac{z^{\prime}-\bar{z}}{2i}\right)^{-3/2}f(z)\overline{f(z^{\prime})}d\mu_{-\frac{1}{2}}(z)
d\mu_{-\frac{1}{2}}(z^{\prime})\\
&=\iint \rho_{-1/2}\left(z^{\prime}-\bar{z}\right)\overline{f(z^{\prime})}f(z)d\mu_{-\frac{1}{2}}(z)
d\mu_{-\frac{1}{2}}(z^{\prime})\\
&=\int \overline{f(z)}f(z)d\mu_{-\frac{1}{2}}(z)=\int |f(z)|^2 d\mu_{-\frac{1}{2}}(z)<+\infty\qedhere
\end{align*}
\end{proof}
\begin{theorem}\label{aeisometry}
 $A_e:\mathcal{S}_e(\mathbb{R})\rightarrow H_{1/2}$ is a unitary transform and the inverse transform is given by 
\[(A_e^{-1}f)(p)=\frac{1}{\sqrt{2}\pi^{3/4}}\lim_{\sigma,\gamma\rightarrow\infty} \int_{R}p^2e^{-i\bar{z}p^2/2}f(z)a^{1/2}dadb\]
where $R=\{z\in\mathbb{C}:|Re(z)|<\sigma \hspace{0.1cm}\text{and}\hspace{0.1cm} \frac{1}{\gamma}<Im(z)<\gamma\}$
\end{theorem}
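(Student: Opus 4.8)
\section*{Proof proposal for Theorem \ref{aeisometry}}

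The plan is to run the argument of the odd case (Theorem \ref{aoisometry}) almost verbatim, tracking the three features that distinguish the even situation: the momentum weight is $p^2$ rather than $p$, the measure $d\mu_{\frac12}$ carries the factor $a^{1/2}$ rather than $a^{-1/2}$, and the Gamma integral that appears will be $\int_0^\infty u^{3/2}e^{-su}\,du=\Gamma(5/2)s^{-5/2}$ in place of the $\Gamma(3/2)$ integral. The outcome is that everything reassembles into the reproducing kernel $\rho_{\frac12}$ identified in Section 2 instead of $\rho_{-\frac12}$.

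First I would establish that $A_e$ is an isometry by computing $\langle A_e\varphi, A_e\psi\rangle_{\frac12}$ directly. Substituting the integral definition (\ref{transformsoddeven}) and writing $z=b+ia$, so that $\frac{z-\bar z}{2i}=a$ and $\frac{dz\wedge d\bar z}{2i}=da\,db$, the inner product becomes a fourfold integral in $p,p',a,b$ whose integrand is a constant multiple of $p^2 p'^2\,e^{i\frac b2(p'^2-p^2)}e^{-\frac a2(p'^2+p^2)}\overline{\psi(p)}\varphi(p')\,a^{1/2}$. The oscillatory $b$-integral produces a multiple of $\delta(p^2-p'^2)=\frac{1}{2|p|}\bigl[\delta(p-p')+\delta(p+p')\bigr]$. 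Because $\varphi,\psi\in\mathcal S_e(\mathbb R)$ and the weight $p^2p'^2$ is even in each variable, the $\delta(p-p')$ and $\delta(p+p')$ contributions add constructively; this is exactly where evenness enters, replacing the sign bookkeeping supplied by the $pp'$ weight together with oddness in Theorem \ref{aoisometry}. Collapsing the deltas and performing the remaining Gaussian/Gamma integral in $a$ should then reduce the whole expression to $\int_{-\infty}^\infty\overline{\psi(p)}\varphi(p)\,dp=\langle\varphi,\psi\rangle_{L^2(\mathbb R)}$, mirroring the odd computation line for line.

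Next I would verify that the displayed formula is indeed the inverse by computing $\|A_e^{-1}f\|_{L^2(\mathbb R)}^2$ and showing it equals $\|f\|_{\mathcal H_{\frac12}}^2$. Expanding $A_e^{-1}f$ gives a $p^4$ weight in $|A_e^{-1}f(p)|^2$, and integrating in $p$ over $\mathbb R$ reduces, after the substitution $u=p^2$, to $\int_0^\infty u^{3/2}e^{-u(\frac{z'-\bar z}{2i})}\,du=\Gamma(5/2)\bigl(\tfrac{z'-\bar z}{2i}\bigr)^{-5/2}=\tfrac{3\sqrt\pi}{4}\bigl(\tfrac{z'-\bar z}{2i}\bigr)^{-5/2}$, the convergence being guaranteed because $\mathrm{Re}\bigl(\tfrac{z'-\bar z}{2i}\bigr)=\tfrac{a+a'}{2}>0$. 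Combined with the prefactor $\tfrac{1}{2\pi^{3/2}}=(\sqrt2\,\pi^{3/4})^{-2}$, this assembles to exactly $\tfrac{3}{8\pi}\bigl(\tfrac{z'-\bar z}{2i}\bigr)^{-5/2}=\rho_{\frac12}^{z}(z')$. Applying the reproducing property of $\mathcal H_{\frac12}$ then collapses the double integral to $\int|f(z)|^2\,d\mu_{\frac12}(z)$, as required, so isometry together with this explicit inverse yields unitarity.

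The main obstacle is not the algebra, which parallels the odd case, but making the formal steps rigorous: interpreting the oscillatory $b$-integral as a Dirac delta and controlling the $\sigma,\gamma\to\infty$ limit that defines $A_e^{-1}$ over the truncation region $R$. I would handle this exactly as in Theorem \ref{aoisometry}, first proving the identities for $\varphi,\psi$ in the dense even Schwartz class, where Fubini and a Gaussian regularization of the oscillatory integral are legitimate, and then extending by continuity; the restriction $\frac1\gamma<\mathrm{Im}(z)<\gamma$ in $R$ is precisely what keeps the kernel $\bigl(\tfrac{z-\bar z'}{2i}\bigr)^{-5/2}$ integrable away from the real boundary throughout the limiting argument.
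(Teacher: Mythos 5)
Your proposal is correct and follows essentially the same route as the paper's own proof: the isometry is established by the fourfold integral with the oscillatory $b$-integral collapsing to $\delta(p^2-p'^2)$, and the inverse formula is verified by reducing the $p$-integral via $u=p^2$ to $\Gamma(5/2)\left(\tfrac{z'-\bar z}{2i}\right)^{-5/2}$, which reassembles into the reproducing kernel $\rho_{\frac{1}{2}}$ so that the reproducing property yields $\int|f(z)|^2\,d\mu_{\frac{1}{2}}(z)$. Your added remarks on regularization and density go beyond what the paper records, but the substance of the argument is identical.
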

\begin{proof}
 $\langle A_e\varphi,A_e\psi \rangle_{1/2}$
\begin{align*}
&=\frac{1}{2\pi^{3/2}}\iint_{-\infty}^{\infty}\int_{0}^{\infty}p^2e^{-i\bar{z}p^2/2}\overline{\psi(p)}dp
\int_{-\infty}^{\infty}p^{\prime 2}e^{izp^{\prime 2}/2}\varphi(p^\prime)dp^\prime \sqrt{a}dadb\\
&=\frac{1}{2\pi^{3/2}}\iiint_{-\infty}^{\infty}\int_{0}^{\infty}p^2p^{\prime 2}e^{-i\bar{z}p^2/2}
e^{izp^{\prime 2}/2}\overline{\psi(p)}\varphi(p^\prime)\sqrt{a}dpdp^\prime dadb\\
&=\frac{1}{2\pi^{3/2}}\iiint_{-\infty}^{\infty}\int_{0}^{\infty}p^2p^{\prime2}e^{i\frac{b}{2}(p^{\prime2}-p^2)}
e^{-\frac{a}{2}(p^{\prime2}+p^2)}\overline{\psi(p)}\varphi(p^\prime)\sqrt{a}dpdp^\prime dadb\\
&=\frac{1}{\sqrt{\pi}}\iint_{-\infty}^{\infty}\int_{0}^{\infty}p^2p^{\prime2}
e^{-\frac{a}{2}(p^{\prime2}+p^2)}\overline{\psi(p)}\varphi(p^\prime)\delta(p^2-p^{\prime 2})\sqrt{a}dpdp^\prime da\\
&=\iint_{-\infty}^{\infty}\int_{0}^{\infty}p^2p^{\prime2}\left[\frac{\delta(p-p^\prime)}{|p|}+\frac{\delta(p+p^\prime)}{|p|}\right]
e^{-\frac{a}{2}(p^{\prime2}+p^2)}\overline{\psi(p)}\varphi(p^\prime)\sqrt{a}dpdp^\prime da\\
&=\int_{-\infty}^{\infty}\int_{0}^{\infty}p^3e^{-ap^2}\overline{\psi(p)}\varphi(p)\sqrt{a}dpda\\
&=\int_{-\infty}^{\infty}\overline{\psi(p)}\varphi(p)dp=\langle \varphi,\psi \rangle_{L^2(\mathbb{R})}
\end{align*}
\begin{align*}
\parallel A_e^{-1}f(p)\parallel^2&=\langle A_e^{-1}f(p),A_e^{-1}f(p) \rangle\\
&=\frac{1}{2\pi^{3/2}}\iint p^4e^{i(z^{\prime}-\bar{z})p^2/2}f(z)\overline{f(z^{\prime})}d\mu_{\frac{1}{2}}(z)
d\mu_{\frac{1}{2}}(z^{\prime})
\end{align*}
\begin{align*}
\int_{-\infty}^{\infty}\parallel& A_e^{-1}f(p)\parallel^2 dp\\
&=\frac{1}{2\pi^{3/2}}\int \int \int_{-\infty}^{\infty}p^4e^{-p^2(\frac{z^{\prime}-\bar{z}}{2i})}dpf(z)\overline{f(z^{\prime})}d\mu_{\frac{1}{2}}(z)
d\mu_{\frac{1}{2}}(z^{\prime})\\
&=\frac{1}{2\pi^{3/2}}\int \int \int_{0}^{\infty}u^{3/2}e^{-u(\frac{z^{\prime}-\bar{z}}{2i})}duf(z)\overline{f(z^{\prime})}d\mu_{\frac{1}{2}}(z)
d\mu_{\frac{1}{2}}(z^{\prime})\\
&=\int \int \frac{3}{8\pi}\left(\frac{z^{\prime}-\bar{z}}{2i}\right)^{-5/2}f(z)\overline{f(z^{\prime})}d\mu_{\frac{1}{2}}(z)
d\mu_{\frac{1}{2}}(z^{\prime})\\
&=\int \int \rho_{1/2}\left(z^{\prime}-\bar{z}\right)\overline{f(z^{\prime})}f(z)d\mu_{\frac{1}{2}}(z)
d\mu_{\frac{1}{2}}(z^{\prime})\\
&=\int \overline{f(z)}f(z)d\mu_{\frac{1}{2}}(z)=\int |f(z)|^2 d\mu_{\frac{1}{2}}(z)<+\infty\qedhere
\end{align*}
\end{proof}
\section{Unitary Equivalent Operators on $\mathcal{H}_\frac{1}{2}\oplus\mathcal{H}_{-\frac{1}{2}}$}\large
An operator $H$ on $\mathcal{H}_\alpha,$ where $\alpha\in\{\frac{1}{2},-\frac{1}{2}\}$ can be expressed by a kernel, a function of two complex variables analytic
in the first and antianalytic in the second $(Hf)(z)=\int h(z,\bar{z}^\prime)f(z^\prime)d\mu_{\alpha}(z^\prime)$ where
the kernel $h(z,\bar{z}^\prime)=\langle \rho_\alpha^z,H\rho_\alpha^{z^\prime}\rangle.$ The kernel of the adjoint operator $H^*$ is $\overline{h(z^\prime,\bar{z})}=\langle H\rho_\alpha^z,\rho_\alpha^{z^\prime}\rangle.$
\begin{definition}
An operator $\widetilde{H}$ in $\mathcal{H}_{\frac{1}{2}}\oplus\mathcal{H}_{-\frac{1}{2}}$ can be defined which is unitary equivalent to an operator $H$ in $L^2(\mathbb{R})$ if $H$ does not change the parity of the function
\[\widetilde{H}= AHA^{-1}=\begin{pmatrix}
                        A_eHA_e^{-1}&0\\
                         0&A_oHA_o^{-1}
                       \end{pmatrix}
\]
and if $H$ change the parity of a function, then
\[ \widetilde{H}= AHA^{-1}=\begin{pmatrix}
                        0&A_eHA_o^{-1}\\
                         A_oHA_e^{-1}&0
                       \end{pmatrix}
\]
\end{definition}
\begin{proposition}\label{equivalentqp}
\begin{align*}
\widetilde{QP}= AQPA^{-1}&=\begin{pmatrix}
                        A_eQPA_e^{-1}&0\\
                         0&A_oQPA_o^{-1}
                       \end{pmatrix}\\&=-i\begin{pmatrix}
                       2z\frac{\partial}{\partial z}+3&0\\
                         0&2\frac{\partial}{\partial z}+2
                       \end{pmatrix} 
\end{align*}
\end{proposition}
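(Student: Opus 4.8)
The plan is to handle the proposition in two moves: first establish that $\widetilde{QP}$ is block diagonal, and then identify each diagonal block as a first-order differential operator in $z$ by transporting the dilation flow generated by $QP$ through the transforms $A_o$ and $A_e$.

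For the block structure, I would note that $QP$ preserves parity. Since $P=-iD$ sends even functions to odd ones and vice versa, and multiplication by $Q$ also reverses parity, their composite $QP$ preserves it, carrying $\mathcal{S}_e(\mathbb{R})$ and $\mathcal{S}_o(\mathbb{R})$ into themselves. By the definition of the unitary equivalent of a parity-preserving operator the off-diagonal entries vanish, and so $\widetilde{QP}=\operatorname{diag}\big(A_eQPA_e^{-1},\,A_oQPA_o^{-1}\big)$, which is exactly the first displayed equality.

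The key step is to compute the two blocks. Here I would exploit the identity established earlier, $\big[e^{mQD}f\big](x)=f(e^{m}x)$, together with $QP=-iQD$. Applying $A_o$ to the dilation $\psi(p)\mapsto\psi(e^{m}p)$ and substituting $q=e^{m}p$ shows that, on the transform side, this flow becomes $f(z)\mapsto e^{-2m}f(e^{-2m}z)$; the analogous substitution for $A_e$, whose weight carries one extra power of $p$, yields $g(z)\mapsto e^{-3m}g(e^{-2m}z)$. Differentiating each one-parameter flow at $m=0$ produces the generators $A_oQDA_o^{-1}$ and $A_eQDA_e^{-1}$ as first-order operators in $z$, and multiplying by $-i$ gives $A_oQPA_o^{-1}$ and $A_eQPA_e^{-1}$; reading off the coefficients assembles the matrix. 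As a direct cross-check I would also run the computation by brute force: apply $A_o$ to $QP\psi=-ip\psi'$, integrate by parts in $p$ (the boundary terms vanish because $|e^{izp^2/2}|=e^{-(\operatorname{Im}z)p^2/2}$ decays for $z$ in the upper half-plane and $\psi\in\mathcal{S}(\mathbb{R})$), and then use $p^2e^{izp^2/2}=\tfrac{2}{i}\,\partial_z e^{izp^2/2}$ to turn the surviving $p$-moments into $z$-derivatives of $f=A_o\psi$.

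The routine but error-prone part is the constant bookkeeping in this last computation: the scalar term of each block is governed by the power of $p$ in the transform weight (producing the $3$ for the even block and the $2$ for the odd one), while the factors of $i$ and of $z$ coming from the integration by parts and from the moment-to-derivative identity must be tracked exactly, since a single slip shifts the entire matrix. I would close by verifying that the resulting operators preserve both analyticity and the weighted square-integrability defining $\mathcal{H}_{\pm\frac12}$, so that $\widetilde{QP}$ is a well-defined operator on $\mathcal{H}_{\frac12}\oplus\mathcal{H}_{-\frac12}$ on the natural dense domain spanned by the orthogonal systems $\ell_n^{\pm\frac12}$.
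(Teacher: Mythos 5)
Your route is genuinely different from the paper's. The paper never leaves the inner-product representation $f(z)=\langle\psi,\varphi^z\rangle$: it moves $QP$ onto the fiducial vectors as $PQ$, computes $PQ\varphi^z$ explicitly for $\varphi_o^z(p)=\frac{1}{2\pi^{3/4}}p\,e^{-i\bar z p^2/2}$ and $\varphi_e^z(p)=\frac{1}{\sqrt2\,\pi^{3/4}}p^2e^{-i\bar z p^2/2}$, rewrites the result as $(2\bar z\partial_{\bar z}+3)\varphi_e^z$ and $(2\bar z\partial_{\bar z}+2)\varphi_o^z$, and pulls the $\bar z$-derivatives out of the inner product as $z$-derivatives of $f$. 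You instead conjugate the whole dilation group $e^{mQD}$ through $A_o$ and $A_e$ by a change of variables in the defining integrals, obtaining the flows $f(z)\mapsto e^{-2m}f(e^{-2m}z)$ and $g(z)\mapsto e^{-3m}g(e^{-2m}z)$, and differentiate at $m=0$. This reuses the identity $e^{mQD}f=f(e^m\cdot)$ from Section 1, explains the constants $2$ and $3$ structurally (power of $p$ in the weight plus the Jacobian), and silently corrects the typo in the statement (the lower block must be $2z\frac{\partial}{\partial z}+2$, not $2\frac{\partial}{\partial z}+2$). The price is that you must justify differentiating the flow on a dense domain, which you address; the paper's computation is shorter but rests entirely on adjoint bookkeeping.

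There is, however, a genuine discrepancy that you hide behind ``reading off the coefficients assembles the matrix'': carried out exactly, your method gives the \emph{opposite overall sign} to the proposition. From your flows, $A_oQDA_o^{-1}=-(2z\partial_z+2)$ and $A_eQDA_e^{-1}=-(2z\partial_z+3)$, so multiplying by $-i$ (since $QP=-iQD$) yields $A_oQPA_o^{-1}=+i\,(2z\partial_z+2)$ and $A_eQPA_e^{-1}=+i\,(2z\partial_z+3)$, i.e. $\widetilde{QP}=+i\,\mathrm{diag}\bigl(2z\partial_z+3,\;2z\partial_z+2\bigr)$. Your own integration-by-parts cross-check gives the same $+i$: $(A_oQP\psi)(z)=\frac{-i}{2\pi^{3/4}}\int p^2e^{izp^2/2}\psi'(p)\,dp=\frac{i}{2\pi^{3/4}}\int(2p+izp^3)e^{izp^2/2}\psi(p)\,dp=i\bigl(2f(z)+2zf'(z)\bigr)$, using $\frac{1}{2\pi^{3/4}}\int p^3e^{izp^2/2}\psi\,dp=-2if'(z)$. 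The paper reaches $-i$ only through the step $\langle QP\psi,\varphi^z\rangle=-\langle\psi,PQ\varphi^z\rangle$, which is inconsistent with $(QP)^*=PQ$ for the self-adjoint $P=-iD$ and $Q$: the minus sign would be appropriate for the skew-adjoint $D$, but for $P$ the conjugation of the factor $-i$ cancels it. So your argument, done honestly, does not prove the proposition as printed; it proves the sign-corrected statement. You must either flag the sign error in the statement and in the paper's adjoint step, or locate a compensating sign in your own computation — asserting that the coefficients match as printed is the one thing you cannot do.
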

\begin{proof}
For convenience we take $A_e\psi_e=f_e$ and $A_o\psi_o=f_o$
\begin{align*}
(A_eQPA_e^{-1})f_e(z)&=\langle QP\psi_e,\varphi_e^z\rangle=-\langle\psi_e,PQ\varphi_e^z\rangle\\
&=-i\langle\psi_e,\left(2\bar{z}\frac{\partial}{\partial \bar{z}}+3\right) \varphi_e^z\rangle=-i
\left(2z\frac{\partial}{\partial z}+3\right)\langle\psi_e,\varphi_e^z\rangle\\
(A_oQPA_o^{-1})f_o(z)&=\langle QP\psi_o,\varphi_o^z\rangle=-\langle\psi_o,PQ\varphi_o^z\rangle\\
&=-i\langle\psi_o, p\frac{\partial }{\partial p}\varphi_o^z+\varphi_o^z\rangle=-i\langle\psi_o,\left(2\bar{z}\frac{\partial}{\partial\bar{z}}+2\right)\varphi_o^z\rangle\\
&=-2i\left(z\frac{\partial}{\partial z}+1\right)\langle\psi_o,\varphi_o^z\rangle\qedhere
\end{align*}
\end{proof}
\begin{proposition} Multiplication operator $\zeta f(z)=zf(z)$ in $\mathcal{H}_{\frac{1}{2}}\oplus\mathcal{H}_{-\frac{1}{2}}$
\begin{equation}
\zeta A_e \approx A_e\left(\frac{Q^{-1}P+PQ^{-1}}{2}+\frac{3iQ^{-2}}{2}\right) 
\end{equation}
and 
\begin{equation}
\zeta A_o \approx A_o\left(\frac{Q^{-1}P+PQ^{-1}}{2}+\frac{iQ^{-2}}{2}\right) 
\end{equation}
\begin{proof}
 \begin{align*}
PQ^{-1}\varphi_o^z(p)&=-i\frac{\partial}{\partial p}\left(\frac{1}{p}\varphi_o^z(p)\right)\\&=\frac{-i}{p}\frac{\partial}{\partial p}\varphi_o^z(p)+\frac{i}{p^2}\varphi_o^z(p)\\
Q^{-1}P\varphi_o^z(p)&=\frac{-i}{p}\frac{\partial}{\partial p}\varphi_o^z(p)\\
\left[\frac{Q^{-1}P+PQ^{-1}}{2}\right]\varphi_o^z(p)&=\frac{-i}{p}\frac{\partial}{\partial p}\varphi_o^z(p)+\frac{i}{2p^2}\varphi_o^z(p)\\
\left[\frac{Q^{-1}P+PQ^{-1}}{2}-\frac{i}{2}Q^{-2}\right]\varphi_o^z(p)&=\frac{-i}{p}\frac{\partial}{\partial p}\varphi_o^z(p)
\end{align*}
If $\varphi_o^z(p)=\frac{1}{2\pi^{3/4}}p e^{-i\bar{z}p^2/2}$ then, 
\begin{align*}
\frac{i}{p}\frac{\partial}{\partial p}\varphi_o^z(p)&=\frac{i}{2p\pi^{3/4}}\frac{\partial}{\partial p}\left(pe^{-i\bar{z}p^2/2}\right)
=\frac{1}{2\pi^{3/4}}\left(\bar{z}p+\frac{i}{p}\right)e^{-i\bar{z}p^2/2}\\&=\left(\bar{z}+iQ^{-2}\right)\varphi_o^z(p)\\
\bar{z}\varphi_o^z(p)&=-\left[\frac{Q^{-1}P+PQ^{-1}}{2}+\frac{i}{2}Q^{-2}\right]\varphi_o^z(p)\\
z\overline{\varphi_o^z(p)}&=\overline{-\left[\frac{Q^{-1}P+PQ^{-1}}{2}+\frac{i}{2}Q^{-2}\right]\varphi_o^z(p)}
\end{align*}
\begin{align*}
(\zeta f)(z)&=zf(z)=\langle\psi_e+\psi_o,\bar{z}\varphi_e^z+\bar{z}\varphi_o^z\rangle=\langle\psi_e,\bar{z}\varphi_e^z\rangle+\langle\psi_o,\bar{z}\varphi_o^z\rangle
\end{align*}
\small{\begin{align*}
\zeta\begin{pmatrix}
 f_e\\
f_o
\end{pmatrix}=\begin{pmatrix}
                        A_e\frac{Q^{-1}P+PQ^{-1}}{2}+\frac{3iQ^{-2}}{2}A_e^{-1}&0\\
                         0&A_o\frac{Q^{-1}P+PQ^{-1}}{2}+\frac{iQ^{-2}}{2}A_o^{-1}
                       \end{pmatrix}\begin{pmatrix}
 f_e\\
f_o
\end{pmatrix} \end{align*}}
which gives the unitary equivalent operator correspoding to the multiplication operator in $\mathcal{H}_{\frac{1}{2}}\oplus\mathcal{H}_{-\frac{1}{2}}.$
\end{proof}
\end{proposition}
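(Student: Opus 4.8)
The plan is to reduce the statement to a first-order differential identity satisfied by the fiducial vectors $\varphi_o^z$ and $\varphi_e^z$, and then to transport that identity through the transforms $A_o,A_e$ by moving a differential operator in the variable $p$ off the fiducial vector and onto the argument $\psi$. The starting point is the inner-product form of the transforms recorded in (\ref{transformsoddeven}), namely $(A_o\psi)(z)=\langle\psi,\varphi_o^z\rangle$ and $(A_e\psi)(z)=\langle\psi,\varphi_e^z\rangle$.

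First I would use that multiplication by $z$ sits outside the sesquilinear pairing and that the pairing is conjugate-linear in its second slot, so that $(\zeta A_o\psi)(z)=z\langle\psi,\varphi_o^z\rangle=\langle\psi,\bar z\,\varphi_o^z\rangle$, and likewise for $A_e$. This converts the multiplication operator $\zeta$ into the problem of realizing $\bar z\,\varphi_o^z$ and $\bar z\,\varphi_e^z$ as operators in $p$ acting on the corresponding fiducial vectors.

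The key computational step is this differential identity. Because $\varphi_o^z(p)=\frac{1}{2\pi^{3/4}}p\,e^{-i\bar z p^2/2}$ carries the factor $e^{-i\bar z p^2/2}$, differentiating in $p$ brings down a factor $\bar z p$, which is precisely the mechanism that manufactures the multiplier $\bar z$. Concretely I would evaluate $\frac{i}{p}\frac{\partial}{\partial p}\varphi_o^z$, identify $\frac{-i}{p}\frac{\partial}{\partial p}=\frac{Q^{-1}P+PQ^{-1}}{2}-\frac{i}{2}Q^{-2}$, and collect the leftover $Q^{-2}$ terms to reach
\[
\bar z\,\varphi_o^z=-\Big[\tfrac{Q^{-1}P+PQ^{-1}}{2}+\tfrac{i}{2}Q^{-2}\Big]\varphi_o^z ,
\]
with the parallel computation for $\varphi_e^z(p)=\frac{1}{\sqrt2\pi^{3/4}}p^2 e^{-i\bar z p^2/2}$ producing the coefficient $\tfrac{3i}{2}$ in place of $\tfrac{i}{2}$, since the extra power of $p$ in $\varphi_e^z$ shifts the $Q^{-2}$ contribution.

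Finally I would transfer the operator onto $\psi$: using that $Q$ and $P$ are self-adjoint, so that $Q^{-1}$, $Q^{-2}$ and the symmetrized product $\frac{Q^{-1}P+PQ^{-1}}{2}$ behave predictably under the pairing and under conjugation, I would rewrite $\langle\psi,\bar z\,\varphi_o^z\rangle$ as $(A_o M\psi)(z)$ for $M=\frac{Q^{-1}P+PQ^{-1}}{2}+\frac{i}{2}Q^{-2}$, and similarly for the even channel with $\frac{3i}{2}Q^{-2}$, which assembles into the block-diagonal form in the statement. The main obstacle I anticipate is analytic rather than algebraic: the operator $Q^{-1}$ is unbounded and singular at $p=0$, so on $\mathcal{S}_e(\mathbb{R})$, where functions need not vanish at the origin, the expressions $Q^{-1}\psi$ and $Q^{-2}\psi$ need justification, and multiplication by $z$ need not preserve $\mathcal{H}_{\pm 1/2}$; these domain issues are exactly what force the relation to be recorded with $\approx$ rather than as a strict operator identity, while the careful tracking of signs when carrying the $P$-containing terms across the pairing is the delicate point to control.
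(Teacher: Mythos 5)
You have reproduced, in essence, the paper's own argument: the same reduction $(\zeta A_o\psi)(z)=z\langle\psi,\varphi_o^z\rangle=\langle\psi,\bar z\,\varphi_o^z\rangle$, the same first-order differential identity $\bar z\,\varphi_o^z=-\left[\tfrac{Q^{-1}P+PQ^{-1}}{2}+\tfrac{i}{2}Q^{-2}\right]\varphi_o^z$ extracted from the Gaussian form of the fiducial vector, and the same final step of transferring that operator from the fiducial-vector slot onto $\psi$ to assemble the block-diagonal form. If anything your write-up is slightly more complete than the paper's, since you actually indicate the even-channel computation that produces the $\tfrac{3i}{2}Q^{-2}$ coefficient (which the paper asserts but never carries out) and you explicitly flag the adjoint-sign bookkeeping and the $Q^{-1}$ domain questions that the paper's final display passes over in silence.
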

\bibliographystyle{siam}{}
\nocite{FAU,BVL}
\bibliography{myrefs}
\end{document}